\def\blfootnote{\xdef\@thefnmark{}\@footnotetext}
\def\BState{\State\hskip-\ALG@thistlm}
\newcommand{\removed}[1]{}
\newcommand{\rem}{\mathbf}
\newcommand\tab[1][1cm]{\hspace*{#1}}
\newcommand\given[1][]{\:#1\vert\:}
\definecolor{DarkRed}{RGB}{182,11,1}
\DeclarePairedDelimiter\ceil{\lceil}{\rceil}
\algnewcommand{\IfThenElse}[3]{
	\State \algorithmicif\ #1\ \algorithmicthen\ #2\ \algorithmicelse\ #3}
\title{Fast Approximate Counting and Leader Election in Populations\thanks{All authors were supported by the EEE/CS initiative NeST. The last author was also supported by the Leverhulme Research Centre for Functional Materials Design.}}
\author{Othon Michail\inst{1} \and Paul G. Spirakis\inst{1,2} \and Michail Theofilatos\inst{1}}
\institute{Department of Computer Science, University of Liverpool, UK \and Computer Engineering and Informatics Department, University of Patras, Greece\\
	Email:\email{ \{Othon.Michail, P.Spirakis, Michail.Theofilatos\}@liverpool.ac.uk}}
\begin{document}

\maketitle

\begin{abstract}
We study the problems of leader election and population size counting for \emph{population protocols}:
networks of finite-state anonymous agents that interact randomly under a uniform random scheduler.
We show a protocol for leader election that terminates in $O(\log_m(n) \cdot \log_2 n)$ parallel time,
where $m$ is a parameter,
using $O(\max\{m,\log n\})$ states.
By adjusting the parameter $m$ between a constant and $n$,
we obtain a single leader election protocol whose time and space can be smoothly traded off between $O(\log^2 n)$ to $O(\log n)$ time and $O(\log n)$ to $O(n)$ states.
Finally, we give a protocol which provides an upper bound $\hat{n}$ of the size $n$ of the population, where $\hat{n}$ is at most $n^a$ for some $a>1$. This protocol assumes the existence of a unique leader in the population and stabilizes in $\Theta{(\log{n})}$ parallel time, using constant number of states in every node, except the unique leader which is required to use $\Theta{(\log^2{n})}$ states.

\end{abstract}

\noindent
\textbf{Keywords:} population protocol, epidemic, leader election, counting, approximate counting, polylogarithmic time protocol\newline

\section{Introduction}
\label{sec:intro}

\emph{Population protocols} \cite{AADFP06} are networks that consist of very weak computational entities (also called \textit{nodes} or \textit{agents}), regarding their individual capabilities. These networks have been shown that are able to construct complex shapes \cite{MS16a} and perform complex computational tasks when they work collectively.
Leader Election, which is a fundamental problem in distributed computing, is the process of designating a single agent as the coordinator of some task distributed among several nodes. The nodes communicate among themselves in order to decide which of them will get into the \textit{leader} state.
\textit{Counting} is also a fundamental problem in distributed computing, where nodes must determine the size $n$ of the population. Finally, we call \textit{Approximate Counting} the problem in which nodes must determine an estimation $k$ of the population size $n$. Counting can be then considered as a special case of population size estimation, where $k=n$.

Many distributed tasks require the existence of a leader prior to the execution of the protocol and, furthermore, some knowledge about the system (for instance the size of the population) can also help to solve these tasks more efficiently with respect both to time and space.

Consider the setting in which an agent is in an initial state \textit{a}, the rest $n-1$ agents are in state \textit{b} and the only existing transition is $(a,b) \rightarrow (a,a)$. This is the \textit{one-way epidemic} process and it can be shown that the expected time to convergence under the uniform random scheduler is $\Theta(n\log{n})$ (e.g., \cite{AAE08}), thus $\Theta(\log{n})$ \textit{parallel time}.
In this work, we make an extensive use of epidemics, which means that information is being spread throughout the population, thus all nodes will obtain this information in $O(\log{n})$ expected parallel time. We use this property to construct an algorithm that solves the \textit{Leader Election} problem. In addition, by observing the rate of the epidemic spreading under the uniform random scheduler, we can extract valuable information about the population. This is the key idea of our \textit{Approximate Counting} algorithm.

\subsection{Related Work}
\label{subsec:related}

The framework of population protocols was first introduced by Angluin et al. \cite{AADFP06} in order to model the interactions in networks between small resource-limited mobile agents. When operating under a uniform random scheduler, population protocols are formally equivalent to a restricted version of stochastic Chemical Reaction Networks (CRNs), which model chemistry in a well-mixed solution \cite{SCWB08}. ``CRNs are widely used to describe information processing occurring in natural cellular regulatory networks, and with upcoming advances in synthetic biology, CRNs are a promising programming language for the design of artificial molecular control circuitry'' \cite{CDS14, Do14}. Results in both population protocols and CRNs can be transfered to each other, owing to a formal equivalence between these models.

Angluin et al. \cite{AAER07} showed that all predicates stably computable in population protocols (and certain generalizations of it) are semilinear. Semilinearity persists up to $o(\log\log n)$ local space but not more than this \cite{CMNPS11}. Moreover, the computational power of population protocols can be increased to the commutative subclass of $\rem{NSPACE}(n^2)$, if we allow the processes to form connections between each other that can hold a state from a finite domain \cite{MCS11}, or by equipping them with unique identifiers, as in \cite{GR09}.
For introductory texts to population protocols the interested reader is encouraged to consult \cite{AR09,MCS11} and \cite{MS18} (the latter discusses population protocols and related developments as part of a more general overview of the emerging theory of dynamic networks).

Optimal algorithms, regarding the time complexity of fundamental tasks in distributed networks, for example leader election and majority, is the key for many distributed problems. For instance, the help of a central coordinator can lead to simpler and more efficient protocols \cite{AAE08}. There are many solutions to the problem of leader election, such as in networks with nodes having distinct labels or anonymous networks \cite{An80, ASW85, AG15, GS18, FJ06}.

Although the availability of an initial leader does not increase the computational power of standard population protocols (in contrast, it does in some settings where faults can occur \cite{DFI17}), still it may allow faster computation. Specifically, the fastest known population protocols for semilinear predicates without a leader take as long as linear parallel time to converge ($\Theta(n)$). On the other hand, when the process is coordinated by a unique leader, it is known that any semilinear predicate can be stably computed with polylogarithmic expected convergence time ($O(\log^5 n)$) \cite{AAE06}.

For several years, the best known algorithm for leader election in population protocols was the pairwise-elimination protocol of Angluin et al. \cite{AADFP06}, in which all nodes are leaders in state $l$ initially and the only effective transition is $(l,l)\rightarrow (l,f)$. This protocol always stabilizes to a configuration with unique leader, but this takes on average linear time. Recently, Doty and Soloveichik \cite{DS15} proved that not only this, but any standard population protocol requires linear time to solve leader election. This immediately led the research community to look into ways of strengthening the population protocol model in order to enable the development of sub-linear time protocols for leader election and other problems (note that Belleville, Doty, and Soloveichik \cite{BDS17} recently showed that such linear time lower bounds hold for a larger family of problems and not just for leader election). Fortunately, in the same way that increasing the local space of agents led to a substantial increase of the class of computable predicates \cite{CMNPS11}, it has started to become evident that it can also be exploited to substantially speed-up computations. Alistarh and Gelashvili \cite{AG15} proposed the first sub-linear leader election protocol, which stabilizes in $O(\log^3n)$ parallel time, assuming $O(\log^3n)$ states at each agent. In another recent work, Gasieniec and Stachowiak \cite{GS18} designed a space optimal ($O(\log\log{n})$ states) leader election protocol, which stabilises in $O(\log^2n)$ parallel time. They use the concept of phase clocks (introduced in \cite{AAE08} for population protocols), which is a synchronization and coordination tool in distributed computing. General characterizations, including upper and lower bounds, of the trade-offs between time and space in population protocols were recently achieved in \cite{AAEGR17}. Moreover, some papers \cite{MOKY12,DDFSV17} have studied leader election in the mediated population protocol model.

For counting, the most studied case is that of \emph{self-stabilization},
which makes the strong adversarial assumption that arbitrary corruption of memory is possible in any agent at any time,
and promises only that eventually it will stop.
Thus, the protocol must be designed to work from any possible configuration of the memory of each agent.
It can be shown that counting is \emph{impossible} without having one agent (the ``base station'')
that is protected from corruption~\cite{beauquier2007self}.
In this scenario
$\Theta(n \log n)$ time is sufficient~\cite{beauquier2015space} and necessary~\cite{AspnesBBS2016} for self-stabilizing counting.

In the less restrictive setting in which all nodes start from the same state (apart possibly from a unique leader and/or unique ids),
not much is known. In a recent work, Michail \cite{M15} proposed a terminating protocol in which a pre-elected leader equipped with two $n$-counters computes an approximate count between $n/2$ and $n$ in $O(n\log{n})$ parallel time with high probability. The idea is to have the leader implement two competing processes, running in parallel. The first process counts the number of nodes that have been encountered once, the second process counts the number of nodes that have been encountered twice, and the leader terminates when the second counter catches up the first. In the same paper, also a version assuming unique ids instead of a leader was given.

The task of counting has also been studied in the related context of worst-case dynamic networks \cite{IK14, KLO10, MCS13, LBBC14, CFQS12}.

\subsection{Contribution}
In this work we employ the use of simple epidemics in order to provide efficient solutions to approximate counting the size of a population of agents and also to leader election in populations. Our model is that of population protocols. Our goal for both problems is to get polylogarithmic parallel time and to also use small memory per agent. First, we show how to approximately count a population fast (with a leader) and then we show how to elect a leader (very fast) if we have a crude population estimate. \\
\textit{(a)} We start by providing a protocol which provides an upper bound $\hat{n}$ of the size $n$ of the population, where $\hat{n}$ is at most $n^a$ for some $a>1$. This protocol assumes the existence of a unique leader in the population. The runtime of the protocol until stabilization is $\Theta(\log{n})$ parallel time. Each node except the unique leader uses only a constant number of states. However, the leader is required to use $\Theta(\log^2{n})$ states. \\
\textit{(b)} We then look into the problem of electing a leader. We assume an approximate knowledge of the size of the population (i.e., an estimate $\hat{n}$ of at most $n^a$, where $n$ is the population size) and provide a protocol (parameterized by the size $m$ of a counter for drawing local random numbers) that elects a unique leader w.h.p. in $O(\frac{\log^2{n}}{\log{m}})$ parallel time, with number of states $O(\max\{m,\log{n}\})$ per node. 

\section{The model}
\label{sec:model}

In this work, the system consists of a population \textit{V} of \textit{n} distributed and anonymous (i.e., do not have unique IDs) \textit{processes}, also called \textit{nodes} or \textit{agents}, that are capable to perform local computations. Each of them is executing as a deterministic state machine from a finite set of states $Q$ according to a transition function $\delta: Q \times Q \rightarrow Q \times Q$. Their interaction is based on the probabilistic (uniform random) scheduler, which picks in every discrete step a random edge from the complete graph $G$ on $n$ vertices. When two agents interact, they mutually access their local states, updating them according to the transition function $\delta$. The transition function is a part of the population protocol which all nodes store and execute locally.

\textit{The time is measured as the number of steps until stabilization, divided by $n$ (parallel time)}. The protocols that we propose do not enable or disable connections between nodes, in contrast with \cite{MS16a}, where Michail and Spirakis considered a model where a (virtual or physical) connection between two processes can be in one of a finite number of possible states. The transition function that we present throughout this paper, follows the notation $(x,y) \rightarrow (z,w)$, which refers to the process states before \textit{(x and y)} and after \textit{(z and w)} the interaction, that is, the transition function maps pairs of states to pairs of states. \\

\paragraph{The Leader Election Problem.} The problem of leader election in distributed computing is for each node eventually to decide whether it is a leader or not subject to only one node decides that it is the leader. An algorithm $A$ solves the leader election problem if eventually the states of agents are divided into \textit{leader} and \textit{follower}, a leader remains elected and a follower can never become a leader. In every execution, exactly one agent becomes leader and the rest determine that they are not leaders. All agents start in the same initial state $q$ and the output is $O=\{leader, follower\}$. A randomized algorithm $R$ solves the leader election problem if eventually only one leader remains in the system w.h.p.

\paragraph{Approximate Counting Problem.}
We define as \textit{Approximate Counting} the problem in which a leader must determine an estimation $\hat{n}$ of the population size, where $\frac{\hat{n}}{a} < n < \hat{n}$. We call $a$ the estimation parameter.

\section{Fast Counting with a unique leader}
\label{sec:counting_with_unique_leader}

In this section we present our \textit{Approximate Counting} protocol. The protocol is presented in Section \ref{subsec:counting_with_unique_leader_abstract}. In Section \ref{subsec:counting_with_unique_leader_analysis} we prove the correctness of our protocol and finally, in Section \ref{sec:all_experiments}, experiments that support our analysis can be found.

\subsection{Abstract description and protocol}
\label{subsec:counting_with_unique_leader_abstract}

In this section, we construct a protocol which solves the problem of approximate counting.
Our probabilistic algorithm for solving the approximate counting problem requires a unique leader who is responsible to give an estimation on the number of nodes. It uses the epidemic spreading technique and it stabilizes in $O(\log{n})$ parallel time. There is initially a unique leader $l$ and all other nodes are in state $q$. The leader $l$ stores two counters in its local memory, initially both set to 0. We use the notation $l_{(c_q,c_a)}$, where $c_q$ is the value of the first counter and $c_a$ is the value of the second one. The leader, after the first interaction starts an epidemic by turning a $q$ node into an $a$ node. Whenever a $q$ node interacts with an $a$ node, its state becomes $a$ $((a, q) \rightarrow (a, a))$. The first counter $c_q$ is being used for counting the $q$ nodes and the second counter $c_a$ for the $a$ nodes, that is, whenever the leader $l$ interacts with a $q$ node, the value of the counter $c_q$ is increased by one and whenever $l$ interacts with an $a$ node, $c_a$ is increased by one. The termination condition is $c_q = c_a$ and then the leader holds a constant-factor approximation of $\log{n}$, which we prove that with high probability is $2^{c_q+1} = 2^{c_a+1}$.

\noindent We first describe a simple terminating protocol that guarantee with high probability $n^{-a} \leq n_e \leq n^a$, for a constant $a$, i.e., the population size estimation is polynomially close to the actual size. Chernoff bounds then imply that repeating this protocol a constant number of times suffices to obtain $n/2 \leq n_e \leq 2n$ with high probability.

\begin{algorithm}[ht]
	\floatname{algorithm}{Protocol} 
	\caption{Population Size Estimation (PSE)}\label{protocol:estimation}
	\begin{algorithmic}[1000]
		
		\State $Q = \{q,\; a,\; l_{(c_q, c_a)}\}$
		\State $\delta:$
		
		\State $ $
		
		\State $(l_{(0,0)},\; q) \rightarrow (l_{(1,0)},\; a)$
		\State $(a,\; q) \rightarrow (a,\; a)$

		\State $(l_{(c_q,c_a)},\; q) \rightarrow (l_{(c_q+1,c_a)},\; q), \; if \; c_q>c_a$
		\State $(l_{(c_q,c_a)},\; a) \rightarrow (l_{(c_q,c_a+1)},\; a), \; if \; c_q>c_a$
		
		\State $(l_{(c_q,c_a)},\; \cdot) \rightarrow (halt,\; \cdot), \; if \; c_q=c_a$
	\end{algorithmic}
\end{algorithm}

\subsection{Analysis}
\label{subsec:counting_with_unique_leader_analysis}

\begin{lemma}
	\label{lemma:size_estimation}
	When half or less of the population has been infected, with high probability $c_q>c_a$. In fact, $c_q - c_a \approx \ln{(n/2)} - \sqrt{\log{n}} > 0$.
\end{lemma}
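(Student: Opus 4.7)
The plan is to analyse the two counters phase by phase, indexing time by the current number $k$ of non-leader agents in state $a$. In each phase $k$ the probability that a given elementary step is a leader--$q$ interaction is $(n-1-k)/\binom{n}{2}$, a leader--$a$ interaction is $k/\binom{n}{2}$, and an infection (which advances the phase) is $k(n-1-k)/\binom{n}{2}$. A Wald-type computation then gives that the expected number of leader--$q$ interactions during phase $k$ is $1/k$ and the expected number of leader--$a$ interactions is $1/(n-1-k)$.

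Summing over $k = 1, \ldots, n/2 - 1$, and including the contribution of $+1$ to $c_q$ from the leader's initial interaction that launches the epidemic, yields
\begin{equation*}
\E[c_q]\;\approx\;\sum_{k=1}^{n/2-1}\frac{1}{k}\;\approx\;\ln(n/2), \qquad \E[c_a]\;\approx\;\sum_{k=1}^{n/2-1}\frac{1}{n-1-k}\;\approx\;\ln 2,
\end{equation*}
so at the moment the infection first covers half the population we have $\E[c_q - c_a] = \ln(n/2) - O(1)$.

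For concentration, I would view the sequence of leader interactions as a $\pm 1$ random walk whose bias $p_k = (n-1-k)/(n-1)$ depends only on the current $k$, with increments bounded by $1$ in absolute value. The total length of this walk up to $k = n/2$ is, with high probability, $\Theta(\log n)$, so Chernoff bounds applied to $c_q$ and $c_a$ separately (equivalently, Azuma's inequality applied to the martingale $(c_q - c_a) - \E[c_q - c_a]$) give a deviation of at most $\sqrt{\log n}$ with probability $1 - n^{-\Omega(1)}$. Combining the two estimates yields $c_q - c_a \geq \ln(n/2) - \sqrt{\log n}$, as claimed.

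The main obstacle is that the lemma asserts $c_q > c_a$ throughout the whole time that at most half the population is infected, not merely at the final moment: the protocol halts as soon as $c_q = c_a$, so I must also guarantee that the walk stays strictly positive across phases $1, \ldots, n/2 - 1$. I would handle this by coupling $c_q - c_a$ with a biased random walk starting at $1$ whose drift $2p_k - 1$ is uniformly positive for $k < n/2$, and then invoking a standard hitting-time bound to show that this dominating walk fails to return to $0$ during the $O(\log n)$ leader interactions that precede the infection reaching $n/2$. A secondary technicality is the randomness of the per-phase durations themselves, but this can be handled by a separate Chernoff argument on leader participation (probability $2/n$ per step) showing that the number of leader interactions is concentrated around $\Theta(\log n)$.
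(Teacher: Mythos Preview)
Your approach is the paper's: phase-by-phase decomposition indexed by the number $k$ of infected agents, Wald's equation to obtain $\E[c_q]\approx\ln(n/2)$ and $\E[c_a]\approx\ln 2$ by the half-infection mark, then Chernoff-type concentration. The only cosmetic difference is that the paper bounds $c_q$ and $c_a$ separately (applying a large-$\delta$ Chernoff to $c_a$, whose mean is constant, which is where the $\sqrt{\log n}$ term originates), whereas you propose Azuma on the difference.

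You are in fact more careful than the paper on one point: the paper's proof only establishes the inequality at the endpoint $k=n/2$ and then simply asserts the non-termination corollary, so your ``main obstacle'' paragraph about early halting fills a gap the paper leaves open. One caveat on your proposed fix: the drift $2p_k-1=(n-1-2k)/(n-1)$ is \emph{not} uniformly bounded away from zero as $k\to n/2$, so ``uniformly positive drift plus a standard hitting-time bound'' is not quite enough as stated. The clean way to finish is to split into an early regime (small $k$, drift near $1$, the gap $c_q-c_a$ climbs essentially deterministically to $\Theta(\log k)$) and a late regime (gap already of order $\log n$, whence a maximal-type inequality over the remaining $O(\log n)$ leader interactions keeps it positive with high probability).
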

\begin{proof}
	We divide the process of the epidemic elimination into rounds $i$, where round $i$ means that there exist $i$ infected nodes in the population. Call an interaction a success if an effective rule applies
	and a new $a$ appears on some node. Let the random variable $X$ be the total number of interactions between the leader $l$ and non-infected nodes $q$, the random variable $Y$ be the total number of interactions between $l$ and infected nodes $a$ and the r.v. $I$ be the total number of interactions in the population until all nodes become infected.
	We also define the r.v. $X_i,\; Y_i$ and $I_i$ to be the corresponding numbers in round $i$. Then, it holds that $X = \sum_{i=1}^{n}X_i,\; Y = \sum_{i=1}^{n}Y_i$ and $I = \sum_{i=1}^{n}I_i$. Finally, let the r.v. $X_{ij}$ and $Y_{ij}$ be independent Bernoulli trials such that for $1 \leq j \leq I_i$, $Pr[X_{ij}=1]=p_{Xi}$, $Pr[X_{ij}=0]=1-p_{Xi}$, $Pr[Y_{ij}=1]=p_{Yi}$ and $Pr[Y_{ij}=0]=1-p_{Yi}$. This means that in every interaction in round $i$, the leader, if chosen, interacts with a $q$ node with probability $p_{Xi}$ and with an $a$ node with probability $p_{Yi}$. Then, it holds that $X_i = \sum_{i=1}^{I_i}X_{ij}$ and $Y_i = \sum_{i=1}^{I_i}Y_{ij}$, where $I_i$ is the number of interactions until a success in round $i$.
	
	\begin{equation*}
		p_{Xi} = \frac{2(n-i)}{n(n-1)}, \; p_{Yi} = \frac{2i}{n(n-1)} \text{ and } p_{Ii} = \frac{2i(n-i)}{n(n-1)}
	\end{equation*}
	
	\noindent We also divide the whole process into two phases; the first phase ends when half of the population has been infected, that is $1 \leq i \leq \frac{n}{2}$ and for the second phase it holds that $\frac{n}{2} + 1 \leq i \leq n$. We shall argue that if the counter $c_q$ reaches a value which is a function of $n$, before the second counter $c_a$ reach $c_q$, the leader gives a good estimation. We use $X^a$ and $Y^a$ to indicate the r.v. $X$ and $Y$ during the first phase and $X^b$, $Y^b$ for the second phase. \\
	
	\noindent For $1 \leq i \leq \frac{n}{2}$ (first phase) and by linearity of expectation we have:
	\begin{equation*}
		\begin{split}
			E[X^a] = E[\sum_{i=1}^{n/2}X_i] = E[\sum_{i=1}^{n/2}\sum_{j=1}^{I_i}X_{ij}] = \sum_{i=1}^{n/2}\sum_{j=1}^{I_i}E[X_{ij}]
		\end{split}
	\end{equation*}
	and by Wald's equation, we have that $E[\sum_{i=1}^{I_i}X_{ij}] = E[I_i]E[X_{ij}]$.
	\begin{equation*}
		\begin{split}
			E[X^a] = \sum_{i=1}^{n/2}\frac{n(n-1)}{2i(n-i)}\frac{2(n-i)}{n(n-1)} = \sum_{i=1}^{n/2}\frac{1}{i} = H_{n/2} = \ln{\frac{n}{2}} + a_{n/2} \geq \ln{\frac{n}{2}}
		\end{split}
	\end{equation*}
	where $H_{n/2}$ denotes the $(\frac{n}{2})$th Harmonic number and $0 < a_n < 1$ for all $n \in \mathbb{N}$ (Euler-Mascheroni constant).
	\begin{equation*}
		\begin{split}
			E[Y^a] & = E[\sum_{i=1}^{n/2}Y_i] = E[\sum_{i=1}^{n/2}\sum_{j=1}^{I_i}Y_{ij}] = \sum_{i=1}^{n/2}\sum_{j=1}^{I_i}E[Y_{ij}]
		\end{split}
	\end{equation*}
	and by Wald's equation, we have that $E[\sum_{i=1}^{I_i}Y_{ij}] = E[I_i]E[Y_{ij}]$.
	\begin{equation*}
		\begin{split}
			E[Y^a] = \sum_{i=1}^{n/2}\frac{n(n-1)}{2i(n-i)}\frac{2i}{n(n-1)} = \sum_{i=1}^{n/2}\frac{1}{n-i} = \sum_{i=1}^{n-1}\frac{1}{i} - \sum_{i=1}^{n/2-1}\frac{1}{i} = H_{n-1} - H_{n/2-1} \approx \ln{2} 
		\end{split}
	\end{equation*}
	
	\noindent By Chernoff Bound, the probabilities that the r.v. $X^a$ is less than $(1 - \delta)E(X^a)$ and more than $(1 + \delta)E(X^a)$ are
	\begin{equation*}
		\begin{split}
			Pr[X^a \leq (1 - \delta)E(X^a)] \leq e^{-\frac{\ln{(n/2)}\delta^2}{2}} = \frac{1}{(\frac{n}{2})^{\delta^2/2}}
		\end{split}
	\end{equation*}
	\begin{equation*}
		\begin{split}
			Pr[X^a \geq (1 + \delta)E(X^a)] \leq e^{-\frac{\ln{(n/2)}\delta^2}{3}} = \frac{1}{(\frac{n}{2})^{\delta^2/3}}
		\end{split}
	\end{equation*}
	\noindent that is, $X^a$ does not deviate far from its expectation. The probability that the r.v. $Y^a$ is more than $(1 + \delta)E(Y^a)$, for $\delta=\frac{3\sqrt{\log{n}}}{\ln{2}}$ is
	\begin{equation*}
		\begin{split}
			Pr[Y^a \geq (1 + \delta)E(Y^a)] \leq e^{-\frac{\ln{2}\frac{3\sqrt{\log{n}}}{\ln{2}}}{3}} = \frac{1}{n^{1/2}}
		\end{split}
	\end{equation*}
	
	\noindent Thus, the leader interacts a constant number of times and w.h.p. less than $(1+\delta)E[Y^a]$ times with $a$ nodes during the first phase (half of the population is infected). In addition, it interacts $O(\log{n})$ times with non-infected nodes w.h.p.. In section \ref{sec:all_experiments}, we have tested our results and the Figure \ref{fig:population_size_estimation_counters} confirms this behavior.
	\noindent During the second phase, the infected nodes are more than the non-infected nodes, thus, eventually, the second counter $c_a$ will reach $c_q$ and the leader terminates. By that time, the first counter will already hold a function of $n$ w.h.p. $(c_q - c_a \approx \ln{(n/2)} - \sqrt{\log{n}} > 0)$.
	
	\begin{corollary}
		\label{corollary:population_size_estimation_1}
		\textit{PSE} does not terminate w.h.p. until more than half of the population has been infected.
	\end{corollary}

	\noindent It now suffices to show that the first counter $c_q$ does not continue to rise significantly. During the second phase, where $\frac{n}{2} + 1 \leq i \leq n$, we have
	\begin{equation*}
	\begin{split}
	E[X^b] & = E[\sum_{i=n/2+1}^{n}X_i] = H_n - H_{n/2} \approx \ln{2}
	\end{split}
	\end{equation*}
	
	\noindent By Chernoff Bound, the probability that the r.v. $X^b$ is more than $(1 + \delta)E(X^b)$, for $\delta=\frac{3\log{n}}{\ln{2}}$ is
	\begin{equation*}
	\begin{split}
	Pr[X^b \geq (1 + \delta)E(X^b)] \leq e^{-\frac{\ln{2}\frac{3\log{n}}{\ln{2}}}{3}} = \frac{1}{n}
	\end{split}
	\end{equation*}
	
	$ $
	\qed
\end{proof}

\begin{lemma}
	\label{lemma:size_estimation_2}
	Our \textit{Population Size Estimation} protocol terminates after $\Theta(\log{n})$ parallel time w.h.p..
\end{lemma}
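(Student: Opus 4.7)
The plan is to establish matching $\Omega(\log n)$ and $O(\log n)$ bounds on the parallel termination time, both holding with high probability. The lower bound will come directly from the structural result in Corollary \ref{corollary:population_size_estimation_1} combined with a standard lower bound on the one-way epidemic; the upper bound will be obtained by splitting the execution into ``finish the epidemic'' and ``let $c_a$ catch up to $c_q$,'' each of which I expect to consume only $O(\log n)$ parallel time.

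For the lower bound, I would invoke Corollary \ref{corollary:population_size_estimation_1}, which already tells us that the leader cannot halt until strictly more than $n/2$ nodes are infected. It is a standard fact about the one-way epidemic $(a,q)\to(a,a)$ under the uniform random scheduler that reaching a constant fraction of the population (in particular $n/2$ infections) requires $\Omega(\log n)$ parallel time with high probability, since from the single seed the number of infected nodes can at most roughly double per $\Theta(1)$ parallel time unit. Composing these two facts yields the $\Omega(\log n)$ half of the statement.

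For the upper bound, I would argue in two stages. First, the full one-way epidemic completes (every $q$ becomes $a$) within $O(\log n)$ parallel time with high probability; this is classical (see e.g.\ \cite{AAE08}) and is used throughout the paper. Second, once all non-leader nodes are in state $a$, every subsequent interaction of the leader is with an $a$-node and therefore increments $c_a$ by $1$ (as long as $c_q>c_a$). By Lemma \ref{lemma:size_estimation} we have $c_q-c_a = O(\log n)$ at the end of phase one, and by the bound on $E[X^b]$ proved there the counter $c_q$ grows by at most $O(\log n)$ more before termination with high probability. It therefore suffices to show that in an additional $O(\log n)$ parallel time the leader participates in $\Theta(\log n)$ interactions; this follows from a Chernoff bound on the number of scheduler picks of an edge incident to the leader, which has expectation $\Theta(\log n)$ per $\Theta(\log n)$ parallel time and concentrates sharply. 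Each such interaction (after the epidemic has finished) raises $c_a$ by one until $c_a=c_q$ and the halting rule fires.

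The main obstacle I anticipate is handling the transition between the two stages cleanly, rather than either stage in isolation. We cannot simply condition on ``the epidemic has completed before $c_a$ catches up,'' because during the last part of phase two a few $q$-nodes may still persist and the leader might interact with some of them, continuing to advance $c_q$. To keep the argument clean, I would use a union bound over three high-probability events: (i) the epidemic completes within $c_1\log n$ parallel time, (ii) during this window the leader's excess $c_q-c_a$ stays $O(\log n)$ (which follows from the $X^b$ bound in Lemma \ref{lemma:size_estimation}), and (iii) the leader accumulates $\Theta(\log n)$ post-epidemic interactions within a further $c_2\log n$ parallel time. On the intersection of these events the protocol halts by parallel time $(c_1+c_2)\log n$, giving the required $O(\log n)$ upper bound and completing the $\Theta(\log n)$ conclusion.
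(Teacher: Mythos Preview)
Your proposal is correct but takes a different route from the paper.

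The paper's proof does not separate the two bounds; it argues only from the moment half the population is infected, at which point $c_q-c_a=\Theta(\log n)$ by Lemma~\ref{lemma:size_estimation}. From there it models $c_q-c_a$ as a biased random walk on $\{0,1,2,\dots\}$ with an absorbing barrier at $0$: conditioned on a leader interaction, the leader meets an $a$-node with probability $p_i=i/n>1/2$, so the walk drifts toward $0$ and is absorbed after $\Theta(\log n)$ leader-steps, hence $\Theta(\log n)$ parallel time. You instead wait until the epidemic \emph{completes}, after which every leader interaction deterministically increments $c_a$; combined with the $X^a$ and $X^b$ bounds from Lemma~\ref{lemma:size_estimation} capping $c_q$ at $O(\log n)$, and a Chernoff bound converting leader interactions to parallel time, this yields the upper bound without any random-walk analysis. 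Your decomposition is cleaner and also makes the $\Omega(\log n)$ side explicit (which the paper omits), while the paper's random-walk view is terser but glosses over both the time-varying bias $p_i$ and the translation from leader-steps to parallel time.
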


\begin{proof}
	After half of the population has been infected, it holds that $|c_a - c_q| = \Theta(\log{n})$. When this difference reaches zero, the unique leader terminates. We focus only on the effective interactions, which are always interactions between the leader $l$ and $a$ or $q$ nodes. The probability that an interaction is $(l, a)$ is $p_i = i/n > 1/2$, as more than half of the population is infected. Thus, the probability that an interaction is $(l, q)$ is $q_i = 1 - p_i =(n-i)/n < 1/2$. In fact, the probability $p_i$ is constantly decreasing as the epidemic spreads throughout the population. This process may be viewed as a random walk on a line with positions $[0, \infty)$. The particle starts from position $a\log{n}$ and there is an absorbing barrier at $0$. The position of the particle corresponds to the difference $|c_a - c_q|$ of the two counters and it moves towards zero with probability $p_i>1/2$. By the basic properties of random walks, after $\Theta(\log{n})$ steps, the particle will be absorbed at $0$. Thus, the total parallel time to termination is $\Theta(\log{n})$.

	\begin{corollary}
		\label{corollary:population_size_estimation_2}
		When $c_q = c_a$, w.h.p. $2^{c_q+1}$ is an upper bound on $n$.
	\end{corollary}
	$ $
	\qed
\end{proof}

\section{Leader Election with approximate knowledge of $n$}
\label{sec:leaderelection}

The existence of a \textit{unique leader agent} is a key requirement for many population protocols \cite{AAE08} and generally in distributed computing, thus, having a fast protocol that elects a unique leader is of high significance. In this section, we present our \textit{Leader Election} protocol, giving, at first, an abstract description \ref{subsec:leaderelection_abstract}, the algorithm \ref{subsec:leaderelection_protocol} and then, we present the analysis of it \ref{subsec:leaderelection_analysis}. Finally, we have measured the stabilization time of this protocol for different population sizes and the results can be found in section \ref{sec:all_experiments}.

\subsection{Abstract description}
\label{subsec:leaderelection_abstract}
We assume that the nodes know \textit{an upper bound on the population size $n^b$, where $n$ is the number of nodes and $b$ is any big  constant number}. \\
\noindent All nodes store three variables; the round $e$, a random number $r$ and a counter $c$ and they are able to compute random numbers within a predefined range $[1,m]$. We define two types of states; the leaders ($l$) and the followers ($f$). Initially, all nodes are in state $l$, indicating that they are all potential leaders. The protocol operates in rounds and in every round, the leaders compete with each other trying to survive (i.e., do not become followers). The followers just copy the \textit{tuple} $(r, e)$ from the leaders and try to spread it throughout the population. During the first interaction of two $l$ nodes, one of them becomes follower, a random number between $1$ and $m$ is being generated, the leader enters the first round and the follower copies the round $e$ and the random number $r$ from the leader to its local memory. The followers are only being used for information spreading purposes among the potential leaders and they cannot become leaders again. Throughout this paper, $n$ denotes the \textit{population size} and $m$ \textit{the maximum number that nodes can generate}.\\

\noindent \textbf{Information spreading.} It has been shown that the epidemic spreading of information can accelerate the convergence time of a population protocol. In this work, we adopt this notion and we use the followers as the means of competition and communication among the potential leaders. All leaders try to spread their information (i.e., their round and random number) throughout the population, but w.h.p. all of them except one eventually become followers. We say that a node $x$ wins during an interaction if one of the following holds:
\begin{itemize}
	\item Node $x$ is in a bigger round $e$.
	\item If they are both in the same round, node $x$ has bigger random number $r$.
\end{itemize}
One or more leaders $L$ are in the \textit{dominant state} if their tuple $(r_1, e_1)$ wins every other tuple in the population. Then, the tuple $(r_1, e_1)$ is being spread as an epidemic throughout the population, independently of the other leaders' tuples (all leaders or followers with the tuple $(r_1, e_1)$ always win their competitors). We also call leaders $L$ the \textit{dominant leaders}. \\

\noindent \textbf{Transition to next round.} After the first interaction, a leader $l$ enters the first round. We can group all the other nodes that $l$ can interact with into three independent sets.
\begin{itemize}
	\item The first group contains the nodes that are in a bigger round or have a bigger random number, being in the same round as $l$. If the leader $l$ interacts with such a node, it becomes follower.
	
	\item The second group contains the nodes that are in a smaller round or have a smaller random number, being in the same round as $l$. After an interaction with a node in this group, the other node becomes a follower and the leader increases its counter $c$ by one.
	
	\item The third group contains the followers that have the same tuple $(r, e)$ as $l$. After an interaction with a node in this group, $l$ increases its counter $c$ by one.
\end{itemize}

\noindent As long as the leader $l$ survives (i.e., does not become a follower), it increases or resets its counter $c$, according to the transition function $\delta$. When the counter $c$ reaches $b\log{n}$, where $n^b$ is the upper bound on the population size, it resets it and round $r$ is increased by one. The followers can never increase their round or generate random numbers. \\

\noindent \textbf{Stabilization.} The protocol that we present stabilizes, as the whole population will eventually reach in a final configuration of states. To achieve this, when the round of a leader $l$ reaches $\ceil{\frac{2b\log{n}-\log (b\log^2{n})}{\log{m}}}$, $l$ stops increasing its round $r$, unless it interacts with another leader. This rule guarantees the stabilization of our protocol.

\subsection{The protocol}
\label{subsec:leaderelection_protocol}

In this section, we present our \textit{Leader Election} protocol. We use the notation $p_{r,e}$ to indicate that node $p$ has the random number $r$ and is in the round $e$. Also, we say that $(r_1,e_1)>(r_2,e_2)$ if the tuple $(r_1,e_1)$ wins the tuple $(r_2,e_2)$. A tuple $(r_1,e_1)$ wins the tuple $(r_2,e_2)$ if $e_1>e_2$ or if they are in the same round $(e_1=e_2)$, it holds that $r_1>r_2$.

\begin{algorithm}[ht]
	
	\floatname{algorithm}{Protocol}
	\caption{Leader Election}
	\label{protocol:leader_election}
	\begin{algorithmic}[100]
		\State $Q = \{l, f_{r,e}, l_{r,e}\}: r \in [1, m]$
		\State $\delta:$
		\State 
		\State {\#First interaction between two nodes. One of them becomes follower and the other remains leader. The leader generates a random number $r$ and enters the first round $(e=1)$.}
		\State $(l, l) \rightarrow (l_{r,1}, f_{r,1})$ 
		\State
		
		\State {\#A leader in round 0 always loses (i.e., becomes a follower) against a node in a higher round.}
		\State $(f_{r,e}, l) \rightarrow (f_{r,e}, f_{r,e})$
		\State $(l_{r,e}, l) \rightarrow (l_{r,e}, f_{r,e}), \; l_{counter}=l_{counter}+1$ 
		\State

		\State {\#The winning node propagates its tuple. If a leader loses, it becomes follower.}
		\State $(f_{r,i}, f_{s,j}) \rightarrow (f_{k,l}, f_{k,l}), \; \textbf{if}\; (r,i) > (s,j)\; \textbf{then}\; (k,l)=(r,i)\; \textbf{else}\; (k,l)=(s,j) $
		\State $(l_{r,i}, l_{s,j}) \rightarrow (l_{k,l}, f_{k,l}), \; l_{counter}=l_{counter}+1, \; \textbf{if}\; (r,i) \geq (s,j)\; \textbf{then}\; (k,l)=(r,i)\; \textbf{else}\; (k,l)=(s,j) $
		\State $(l_{r,i}, f_{s,j}) \rightarrow (f_{s,j}, f_{s,j}), \; \textbf{if}\; (s,j)>(r,i)$
		\State $(l_{r,i}, f_{s,j}) \rightarrow (l_{r,i}, f_{r,i}), \; l_{counter}=l_{counter}+1, \; \textbf{if}\; (r,i)>(s,j)$
		\State $(l_{r,e}, f_{r,e}) \rightarrow (l_{k,j}, f_{k,j}), \; l_{counter}=l_{counter}+1$
		\State

		\State {\#When a leader increases its counter, the following code is being executed. It checks whether it has reached $c\log{n}$. If yes, it moves to the next round, generates a new random number and checks if it has reached the final round in order to terminate.}
		\State $\textbf{if}\; (l_{counter}=b\log{n}) \; \textbf{then}\{$
		\State \tab Increase round;
		\State \tab Generate a new random number between 1 and m;
		\State \tab Reset counter to zero;
		\State \tab $\textbf{if}\; (Round=\ceil{\frac{2b\log{n}-\log (b\log^2{n})}{\log{m}}}) \;\textbf{Stop increasing the round, unless you interact with a leader;}$
		\State $\}$
		
	\end{algorithmic}
\end{algorithm}

\subsection{Analysis}
\label{subsec:leaderelection_analysis}

The leader election algorithm that we propose, elects a unique leader after $O(\frac{\log^2{n}}{\log{m}})$ parallel time w.h.p.. To achieve this, the algorithm works in stages, called \textit{epochs} throughout this paper and the number of potential leaders decreases exponentially between the epochs.
An epoch $i$ starts when any leader enters the $i$th round $(r=i)$ and ends when any leader enters the $(i+1)$th round $(r=i+1)$. Here we do the exact analysis for $m=\log n$. This can be generalized to any $m$ between a constant and $n$.

\begin{lemma}
	\label{lemma:1}
	During the execution of the protocol, at least one leader will always exist in the population.
\end{lemma}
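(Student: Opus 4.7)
The plan is to argue that the number of leaders, which is a non-increasing integer quantity, never drops from $1$ to $0$. A case analysis of the rules of Protocol~\ref{protocol:leader_election} shows that every transition changes the leader count by $0$ or $-1$, and the only rules that could take it from $1$ to $0$ are $(f_{r,e},l)\to(f_{r,e},f_{r,e})$, which demotes a plain $l$ upon meeting a tuple-bearing follower, and $(l_{r,i},f_{s,j})\to(f_{s,j},f_{s,j})$ with $(s,j)>(r,i)$, which demotes a non-plain leader against a strictly-higher-tuple follower. The three leader-demoting rules $(l,l)$, $(l_{r,e},l)$ and $(l_{r,i},l_{s,j})$ all need a second leader to fire, so they trivially preserve the $\geq 1$ bound.

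The crux is the structural invariant I would establish by induction on interactions: \emph{once at least one node carries a tuple, the maximum tuple currently present in the population is held by at least one leader}. The base case is the first $(l,l)\to(l_{r,1},f_{r,1})$ interaction, which creates the first tuple $(r,1)$ directly on a leader. For the inductive step I would proceed rule by rule. Follower--follower interactions $(f_{r,i},f_{s,j})\to(f_{k,l},f_{k,l})$ leave the leader set and the multiset of tuples unchanged. The three leader--leader rules produce a surviving leader whose tuple dominates the tuples (if any) of both participants, so a leader still holds the peak. The mixed rules $(l_{r,i},f_{s,j})\to(l_{r,i},f_{r,i})$ with $(r,i)>(s,j)$ and $(l_{r,e},f_{r,e})\to(l_{k,j},f_{k,j})$ either do nothing to the tuple distribution or, in the round-advance case, simultaneously assign both participants the fresh tuple $(r_{\text{new}},e+1)$, which, when it exceeds the old peak, becomes the new peak and is in particular carried by the leader side of the interaction. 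Finally, $(l_{r,i},f_{s,j})\to(f_{s,j},f_{s,j})$ with $(s,j)>(r,i)$ cannot demote a peak-holding leader, since if the demoted leader were at the peak $P=(r,i)$, then the assumption $(s,j)>P$ would already contradict that $P$ is the maximum tuple present.

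Given the invariant, the lemma is immediate: if only one leader $l_{r,i}$ survived, the invariant would force $(r,i)$ to be the current peak, so no follower could carry a strictly greater tuple and the killer rule $(l_{r,i},f_{s,j})\to(f_{s,j},f_{s,j})$ cannot fire; symmetrically, the unique leader cannot be a plain $l$ while any tuple exists, because then no leader would hold the peak, ruling out the other dangerous rule $(f_{r,e},l)\to(f_{r,e},f_{r,e})$. Before any tuple has ever been generated, all $n$ nodes are plain leaders, so the count is trivially $n\geq 1$. I expect the main delicacy will lie in the round-advance interaction $(l_{r,e},f_{r,e})\to(l_{k,j},f_{k,j})$, which simultaneously overwrites the tuples of both participants with the fresh pair $(r_{\text{new}},e+1)$: one must verify carefully that the freshly-created peak lands on the leader side of the interaction, and that any other nodes that previously carried the old peak do not compromise the invariant once the peak moves upward.
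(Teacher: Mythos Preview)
Your proof is correct and takes essentially the same approach as the paper's: the paper argues by contradiction that if a sole leader were demoted it would have to be by a follower holding a strictly larger tuple, which---since followers never generate or increase tuples---must have originated from another still-existing leader. You make the same underlying invariant (the current maximum tuple is always carried by some leader) explicit and verify it transition by transition, which is more thorough but not a different idea.
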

\begin{proof}
	Assume an epoch $e$, in which only one leader $l_1$ with the tuple $(r_1,e_1)$ exists in the population and the rest of the nodes have become followers. In order for $l_1$ to become follower, there should be a follower with a tuple $(r_2,e_2)$, where $(r_2,e_2)>(r_1,e_1)$. But, while the followers can never increase their epoch or generate a new random number, that would imply that there exists another leader $l_2$ with the tuple $(r_2,e_2)$.
	\qed
\end{proof}

\begin{lemma}
	\label{lemma:2}
	Assume an epoch $e$ and $k$ leaders with the dominant tuple $(r, e)$ in this epoch. The expected parallel time to convergence of their epidemic in epoch $e$ is $\Theta(logn)$.
\end{lemma}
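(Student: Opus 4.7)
The plan is to show that the propagation of the dominant tuple $(r,e)$ through the population during epoch $e$ is, operationally, a one-way epidemic in the classical sense, and then apply the standard analysis for such epidemics in the uniform random scheduler model. Call a node \emph{infected} if it currently carries the tuple $(r,e)$, leader or follower alike. The first step is to go through the transition table of Protocol~\ref{protocol:leader_election} and verify two invariants for epoch $e$: (i) an infected node never becomes uninfected during the epoch (a dominant follower keeps its tuple by definition; a dominant leader may be downgraded to a follower, but keeps the tuple; and the round cannot be incremented away from $e$ until the epoch ends); and (ii) whenever an infected node interacts with an uninfected node, the uninfected node adopts $(r,e)$, because by the definition of ``dominant'' the tuple $(r,e)$ beats every other tuple alive in the population during this epoch. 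Consequently, the count $I(t)$ of infected nodes is monotonically non-decreasing and jumps by exactly one on every infected--uninfected interaction.

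With that structural observation in place, the second step is a routine expected-time calculation. Starting with $I(0)=k\geq 1$ infected nodes out of $n$, the probability that a uniformly chosen interaction is between an infected and an uninfected node is $\frac{2i(n-i)}{n(n-1)}$ when $I=i$, so the expected number of interactions for $I$ to go from $i$ to $i+1$ is $\frac{n(n-1)}{2i(n-i)}$. Summing and dividing by $n$ to convert to parallel time gives
\[
\frac{n-1}{2n}\sum_{i=k}^{n-1}\left(\frac{1}{i}+\frac{1}{n-i}\right)\;=\;\frac{n-1}{2n}\bigl(H_{n-1}-H_{k-1}+H_{n-k}\bigr)\;=\;\Theta(\log n),
\]
since the dominant tuple is born in epoch $e$ from a single leader's transition and $k$ grows only by further spreading, so $n-k=\Omega(n)$ when the spreading phase begins. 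The third and final step is concentration: the same one-way epidemic is known (e.g.\ \cite{AAE08}) to complete in $O(\log n)$ parallel time with high probability, and, on the other hand, doubling the infected population requires at least a constant number of parallel rounds each, giving the matching $\Omega(\log n)$ lower bound in expectation when the epidemic is still in its early, exponential-growth phase.

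The main obstacle I expect is the first step rather than the calculation: one must carefully justify that, throughout the epoch, no competing tuple can ever overwrite the dominant one, so that the process really is \emph{one-way}. This requires ruling out, via Lemma~\ref{lemma:1} and the definition of epoch $e$, the appearance of a newer tuple $(r',e')>(r,e)$ \emph{within} epoch $e$: any such tuple would by definition start a new epoch and so is excluded from the analysis window. Once that is in place, the classical epidemic analysis closes the proof without further complications.
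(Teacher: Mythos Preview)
Your proposal is correct and follows essentially the same route as the paper: model the spread of the dominant tuple as a one-way epidemic, compute the expected waiting time $\frac{n(n-1)}{2i(n-i)}$ to go from $i$ to $i+1$ infected nodes, and sum via the partial-fraction decomposition $\frac{1}{i(n-i)}=\frac{1}{n}\bigl(\frac{1}{i}+\frac{1}{n-i}\bigr)$ to get $\Theta(n\log n)$ interactions, i.e.\ $\Theta(\log n)$ parallel time. The paper's proof simply specializes to the worst case $k=1$ and omits both your preliminary structural check that the process really is one-way and your final concentration remark, but the core calculation is identical.
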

\begin{proof}
	Let the random variable $X$ be the total number of interactions until all nodes have the dominant tuple $(r,e)$. We divide the interactions of the protocol into rounds, where round $i$ means that the epidemic has been spread to $i$ nodes. Initially, $i=k$, that is, the $k$ leaders are already infected by the epidemic, but we study the worst case where $k=1$. Call an interaction a success if the epidemic spreads to a new node. Let also the random variables $X_i, 1 \leq i \leq n-1$, be the number of interactions in the $i$th round. Then, $X = \sum_{i=1}^{n-1}X_i$. The probability $p_i$ of success at any interaction during the $i$th round is: \\
	\begin{equation*}
		p_i = \frac{2i(n-i)}{n(n-1)}
	\end{equation*}
	where $i(n-i)$ are the effective interactions and $\frac{n(n-1)}{2}$ are all the possible interactions. By linearity of expectation we have:
	\begin{equation*}
		\begin{split}
			E[X] & = E[\sum_{i=1}^{n-1}X_i] = \sum_{i=1}^{n-1}E[X_i] =\sum_{i=1}^{n-1} \frac{1}{p_i} = \sum_{i=1}^{n-1} \frac{n(n-1)}{2i(n-i)} \\ 
			& = \frac{n(n-1)}{2} \sum_{i=1}^{n-1} \frac{1}{i(n-i)} \\
			& = \frac{n(n-1)}{2} \sum_{i=1}^{n-1} \frac{1}{n}(\frac{1}{i} + \frac{1}{n-i}) \\
			& = \frac{(n-1)}{2} [\sum_{i=1}^{n-1} \frac{1}{i} + \sum_{i=1}^{n-1} \frac{1}{n-i}] \\
			& = \frac{(n-1)}{2}2H_{n-1} \\
			& = (n-1)[ln(n-1)+  a_{n-1}] = \Theta(n\log n)
		\end{split}
	\end{equation*}
	where $H_n$ denotes the $n$th Harmonic number and  $a_n:=H_n-\log n, (n \in \mathbb{N})$ is a decreasing sequence and $0<a_n<1$ for all $n \in \mathbb{N}$ (\textit{Euler-Mascheroni constant}). It terms of parallel time, it holds that $E[\frac{X}{n}] = \frac{E[X]}{n} = \Theta{(\log{n})}$.
	\qed
\end{proof}

\begin{lemma}
	\label{lemma:4}
	If a counter $c$ of a leader $l$ reaches $b\log{n}$, its epidemic will have already been spread throughout the population w.h.p..
\end{lemma}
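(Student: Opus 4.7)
The plan is to contrast two timescales: the time $T_c$ for $l$'s counter to reach $b\log n$, and the time $T_e$ for the dominant epidemic of the current epoch to cover the whole population. I will show that for a sufficiently large constant $b$, we have $T_c > T_e$ with high probability, and then argue that this forces $l$'s tuple to coincide with the dominant one.

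For a lower bound on $T_c$, I will use the fact that every increment of the counter requires $l$ to be one of the two agents selected, and in each step $l$ is selected with probability exactly $2/n$. Hence the number $N_t$ of interactions involving $l$ during $t$ parallel time (i.e., $tn$ steps) is binomial with mean $2t$. A standard Chernoff bound then gives that, for any $t \le (b/4)\log n$, the probability that $N_t \ge b\log n$ is at most $n^{-\Omega(b)}$. Consequently $T_c \ge (b/4)\log n$ parallel time with probability at least $1 - n^{-\Omega(b)}$.

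For an upper bound on $T_e$, Lemma~\ref{lemma:2} already gives $\Theta(\log n)$ expected parallel time for the spread of the dominant tuple. I will promote this to a high-probability bound by revisiting the round-by-round decomposition $X = \sum_{i=1}^{n-1} X_i$ used there: each $X_i$ is geometric with parameter $p_i = 2i(n-i)/(n(n-1))$, and summing exponential tail estimates on the geometric waiting times (splitting into the initial bottleneck phase $i \le \log n$, the bulk $\log n \le i \le n-\log n$, and the symmetric tail) yields $T_e \le c\log n$ parallel time with probability $1-n^{-\Omega(1)}$, for an explicit constant $c$. Choosing $b$ so that $b/4 > c$ then gives $T_c > T_e$ with high probability.

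Finally, assuming both tail events hold, I will conclude the lemma as follows. By time $T_e$, every node of the population carries the dominant tuple of this epoch. If $l$'s own tuple were strictly weaker than the dominant one, then after this saturation point any interaction of $l$ would demote it to a follower via the rule $(l_{r,i}, f_{s,j}) \to (f_{s,j}, f_{s,j})$ for $(s,j)>(r,i)$, so $l$'s counter would freeze well before reaching $b\log n$, contradicting the hypothesis. Therefore $l$'s tuple is the dominant one, and its epidemic has in fact been spread throughout the population by time $T_c$. The main technical obstacle is upgrading Lemma~\ref{lemma:2} to a high-probability bound on $T_e$, since the late rounds of the epidemic have very small success probabilities and require the most care in the tail estimate.
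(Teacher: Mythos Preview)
Your plan is sound and follows the same outline the paper uses: compare the time needed for the dominant epidemic to saturate the population against the time needed for $l$'s counter to accumulate $b\log n$ increments. The paper's own argument is considerably terser. It simply invokes Lemma~\ref{lemma:2} to get $\mu \approx (n-1)\ln(n-1)$ expected interactions for the epidemic, applies a Chernoff bound to $X$, concludes that the number of interactions \emph{per node} until saturation is at most $\ln n$ with high probability, and then observes that $b\log n$ leader interactions exceed this.

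Two points where your treatment is more careful than the paper's. First, the paper actually writes down the \emph{lower}-tail inequality $\Pr[X \le (1-\delta)\mu]$ for the total epidemic time $X$, whereas what is needed (and what you correctly target) is an \emph{upper}-tail bound showing $X$ is unlikely to exceed $cn\log n$; moreover $X$ is a sum of geometric rather than Bernoulli variables, so the multiplicative Chernoff bound requires exactly the kind of phase-by-phase justification you sketch. Second, the paper tacitly assumes that $l$ is one of the dominant leaders of the epoch, so that ``its epidemic'' coincides with the one analysed in Lemma~\ref{lemma:2}; your final paragraph makes this explicit by arguing that a non-dominant $l$ would be demoted on its first interaction after saturation and hence could never accumulate $b\log n$ increments once $b/4 > c$. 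Both refinements are genuine improvements over the paper's presentation, but neither changes the underlying strategy.
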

\begin{proof}
	Let the r.v. $X$ be the total number of interactions until all nodes have been infected by the dominant tuple. By Lemma \ref{lemma:2}, the expected interactions until the epidemic spreads throughout the whole population is $\mu = (n-1)\ln{(n-1)} + \Theta(1)$. By Chernoff Bound and for $\delta = 1/2$, it holds that
	\begin{equation*}
		\begin{split}
			& Pr[X \leq (1-\delta)\mu] \leq e^{\frac{-\delta^2\mu}{2}} \leq e^{-\frac{(n-1)\ln{(n-1)}}{8}} \leq \left( \frac{1}{n-1} \right)^{(n-1)/8}
		\end{split}
	\end{equation*}
	
	\noindent Thus, the interactions per node under the uniform random scheduler until all nodes become infected are w.h.p. $\frac{(n-1)\ln{(n-1)}}{n} < \frac{n\ln{n}}{n} = \ln{n}$. Thus, after $b\log{n}$ interactions, where $n^b$ is the population size estimation and $b$ a large constant, there are no non-infected nodes w.h.p..
	
	\qed
\end{proof}

\begin{theorem}
	\label{theorem:1}
	After $O(\frac{\log{n}}{\log{m}})$ epochs, there is a unique leader in the population w.h.p..
\end{theorem}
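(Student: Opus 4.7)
The plan is to track the number of leaders $L_e$ alive at the start of epoch $e$ and to show that, in the regime where $L_e$ is still large, $L_e$ shrinks by a factor of $\Theta(m)$ per epoch. Initially $L_1 \le n$, so after $O(\log n/\log m)$ epochs I expect $L_e$ to fall to a constant, after which one additional epoch drives the count to $1$ w.h.p.\ I would specialise to $m=\log n$ as the excerpt suggests; the same argument extends to any $m$ between a constant and $n$.

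First I would reduce each epoch to a purely combinatorial ``random ballot'' game. By Lemma~\ref{lemma:4}, once a leader carries the unique dominant tuple $(r,e)$ in the population, that tuple finishes flooding well within the $\Theta(\log n)$ parallel-time budget of a single epoch, so every leader with a strictly smaller tuple is converted to a follower before round $e+1$ begins. Hence the leaders surviving epoch $e$ are exactly those whose freshly drawn random number in $[1,m]$ attains the maximum among the $L_e$ values drawn in that epoch. The main obstacle I anticipate lies exactly here: leaders advance their counters asynchronously, so a leader that promotes first begins flooding its new tuple while others are still completing round $e$. I would handle this by coupling with the synchronous ballot game, using the fact that all $L_e$ counters reach $b\log n$ within $O(\log n)$ parallel time of one another, so no ``first promotee'' can absorb a competitor before that competitor has itself drawn a fresh value in round $e+1$.

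Next I would prove the shrinkage step for large $L_e$. Conditional on $L_e \ge c\,m\log n$, the number of leaders drawing the value $m$ is $\mathrm{Bin}(L_e, 1/m)$, which concentrates in $[L_e/(2m),\,2L_e/m]$ by a Chernoff bound with failure probability $n^{-\Omega(1)}$; simultaneously the probability that \emph{no} leader hits $m$ is $(1-1/m)^{L_e} \le n^{-\Omega(c)}$, so the realised maximum is $m$ and $L_{e+1} \le 2 L_e/m$ w.h.p. Iterating this inequality $O(\log n/\log m)$ times drives $L_e$ below any prescribed polylogarithmic threshold.

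Finally I would handle the endgame. Once $L_e \le \sqrt{m}$, the probability that the $L_e$ draws are pairwise distinct is $\prod_{i=0}^{L_e-1}(1-i/m) \ge 1 - \binom{L_e}{2}/m = 1-o(1)$, in which case exactly one leader attains the strict maximum and $L_{e+1}=1$. A union bound over all $O(\log n/\log m)$ epochs, summing the $n^{-\Omega(1)}$ failure probabilities contributed by the shrinkage and endgame steps, then yields the theorem.
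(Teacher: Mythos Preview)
Your high-level decomposition matches the paper's: both reduce an epoch to a balls-in-bins game among the surviving leaders, use a Chernoff bound to obtain factor-$\Theta(m)$ shrinkage while the leader count is large, and conclude that $O(\log n/\log m)$ epochs bring the count down to the small regime. Your explicit treatment of the asynchrony issue (coupling to the synchronous ballot) is in fact more careful than the paper, which simply appeals to Lemma~\ref{lemma:4}.

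The genuine gap is in your endgame. Two things go wrong. First, at the stated threshold $L_e=\sqrt{m}$ the birthday bound $1-\binom{L_e}{2}/m$ is only about $1/2$, not $1-o(1)$; and even for $L_e=o(\sqrt{m})$ you get at best $1-o(1)$, never $1-n^{-\Omega(1)}$, so a \emph{single} ``all draws distinct'' epoch cannot deliver the w.h.p.\ guarantee the theorem requires. Second, your Chernoff step needs $L_e/m=\Omega(\log n)$, i.e.\ $L_e\ge c\,m\log n$, for the $n^{-\Omega(1)}$ tail bound to hold; once $L_e$ drops below this, the concentration fails, so you cannot iterate it ``below any prescribed polylogarithmic threshold'' as claimed. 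For $m=\log n$ this leaves the entire range from $c\log^2 n$ down to $\sqrt{\log n}$ uncovered.

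The paper handles this part differently. After the Chernoff phase leaves at most $a\log^2 n$ leaders, one further epoch (Chernoff again, now with mean $a\log n$) reduces the count to $O(\log n)$. From there the paper does \emph{not} try to finish in one shot: with $k\le m$ leaders the expected number tying for the maximum is at most~$1$, so by Markov $\Pr[\text{at least two survive}]\le 1/2$ per epoch, and iterating for $\log_m n$ additional epochs drives the failure probability to $2^{-\log_m n}$ while keeping the total epoch count at $O(\log n/\log m)$. If you want to keep your birthday argument, you would likewise have to repeat it over $\Theta(\log_m n)$ epochs rather than relying on a single one.
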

\begin{proof}
	Assume an epoch $e$, in which there are $k$ leaders with the dominant tuple $(r,e)$ and $m$ is the biggest number that the leaders can generate. We shall argue that by the end of the next epoch $e+1$, approximately $\frac{k(m-1)}{m}$ leaders will have become followers and approximately $\frac{k}{m}$ leaders will have a new dominant tuple $(r_2,e_2)$. Whenever the $k$ leaders enter to the next epoch $e+1$, they generate a new random number between $1$ and $m$. Let the random variable $X_e$ be the number of leaders that have randomly generated the biggest number in epoch $e$. We view the possible values of the random choices as $m$ bins and we investigate how many leaders shall go to each bin. Assume the sequence of the random numbers $C_i^e, 1 \leq i \leq k$ that the leaders generate in epoch $e$. Let the random variables $X_i^e$ be independent Bernoulli trials such that, for $1 \leq i \leq k$, $Pr[X_i^e = 1] = p_i$ and $Pr[X_i^e = 0] = 1-p_i$ and $X_e = \sum_{i=1}^{k}X_i^e$. The probability that a leader chooses randomly a number is
	\begin{equation*}
		p_i = \frac{1}{m}
	\end{equation*}
	\noindent Then, the expected number of balls in each bin, thus in the biggest bin also ($X_e$) is 
	\begin{equation*}
		\begin{split}
			& \mu = E(X_e) = E(\sum_{i=1}^{k}X_i^e) = \sum_{i=1}^{k}E(X_i^e) = \sum_{i=1}^{k}p_i = \sum_{i=1}^{k}\frac{1}{m} = \frac{k}{m}
		\end{split}
	\end{equation*}
	
	\noindent Assume now inductively that $X_e \geq a\log^2{n}$, where $a>0$ and $m=\log{n}$. By the Chernoff bound and observing that $k \geq ma\log{n} \Rightarrow \frac{k}{m} \geq a\log{n} \Rightarrow \mu \geq a\log{n}$, we prove that the number of the new dominant leaders will be more than or equal to $\frac{k}{m}(1+\delta)$ with a negligible probability.
	\begin{equation*}
		Pr[X_e \geq (1+\delta)\mu] \leq e^{-\frac{\mu\delta^2}{3}} \leq e^{-\frac{a\log{n}\delta^2}{3}} = n^{-\frac{a\delta^2}{3}} = n^{-\phi}
	\end{equation*}
	
	\noindent For $a \geq \frac{9}{\delta^2}$ it holds that $Pr[X_e \geq (1+\delta)\mu] \leq n^{-3}$. Consequently, if we had $X_e$ leaders in epoch $e$, we now shall have no more than $X_{e+1} \leq (1+\delta)\frac{X_e}{m}$ leaders in epoch $e+1$ with probability $Pr[X_{e+1} \leq (1+\delta)\frac{X_e}{m}] \geq 1 - \frac{1}{n^3}$.
	
	\noindent We can now assume that the expected number of leaders between the epochs can be described by the following recursive function.
	\begin{equation}
		G_e=
		\begin{cases}
			\frac{G_{e-1}}{m}, & i \geq 1 \\
			n, & i=0
		\end{cases}
	\end{equation}
	where $G_e = (1+\delta)X_e$. Then, 
	\begin{equation*}
		G_e = \frac{G_{e-1}}{m} = \frac{G_{e-2}}{m^2} = ... = \frac{n}{m^e}
	\end{equation*}
	The number of the expected epochs until at most $a\log^2{n}$ leaders remain in the population is
	\begin{equation*}
		\begin{split}
			& G_t = a\log^2{n} \Rightarrow \frac{G_{t-1}}{m} = a\log^2{n} \Rightarrow \frac{G_{t-2}}{m^2} = a\log^2{n} \Rightarrow ... \Rightarrow \frac{n}{m^t} = a\log^2{n} \Rightarrow \\
			& m^t = \frac{n}{a\log^2{n}} \Rightarrow log_m(m^t) = log_m(\frac{n}{a\log^2{n}}) \Rightarrow t = \log_m{n} - \log_m(a\log^2{n}) \Rightarrow \\
			& t = \frac{\log{n} - \log{(a\log^2{n})}}{\log{m}} \Rightarrow t = \frac{\log{n} - \log{(a\log^2{n})}}{\log{\log{n}}}
		\end{split}
	\end{equation*}
	
	\noindent Let $E(e)$, be the event that in epoch $e$, there are at most $G_e$ dominant leaders. We consider a success when $(E(e) \given E(1) \cap E(2) \cap \ldots \cap E(e-1))$ occurs until we have at most $\log{n}$ leaders. By taking the union bound, the probability to fail after $t = \frac{\log{n} - \log{(a\log^2{n})}}{\log{\log{n}}}$ epochs is
	given by
	\begin{equation*}
		\begin{split}
			& Pr(\text{fail after } t \text{ epochs}) \leq \sum_{i=0}^{t}Pr[\text{fail in epoch i} \given \text{success until (i-1)th epoch}] \\
			& \leq \sum_{i=0}^{t}\frac{1}{n^{\phi}} = \frac{\frac{\log{n} - \log{(a\log^2{n})}}{\log{\log{n}}}}{n^{\phi}} \leq \frac{1}{n^{\phi - 1}} \leq \frac{1}{n^2}
		\end{split}
	\end{equation*} \\
	$ $
	\begin{corollary}
		\label{corollary:1}
		After $t = \frac{\log{n} - \log{(a\log^2{n})}}{\log{\log{n}}}$ epochs, the remaining leaders are at most $a\log^2{n}$ w.h.p..
	\end{corollary}
	
	\noindent We argue that the number of leaders can be reduced from $a\log^2{n}$ to $a\log{n}$ in one round w.h.p.. The expected value of dominant leaders is now $E[X_{t+1}] = a\log{n}$, thus, by the Chernoff Bound it holds that $Pr[X_{t+1} \geq (1+\delta)\mu] \leq e^{-\frac{a\log{n}\delta^2}{3}}$, and for $a \geq \frac{9}{\delta^2}$, $Pr[X_{t+1} \geq (1+\delta)\mu] \leq n^{-3}$. \\
	
	\noindent Assume w.l.o.g. that $m=a\log{n}$ and according to the previous analysis, there exist $k=a\log{n}$ leaders after $t'=\frac{\log{n} - \log{(a\log^2{n})}}{\log{\log{n}}} + 1$ epochs. The expected value of $X_{t'+1}$ is now $\mu = E[X_{t'+1}] = 1$. Thus, by the Markov Inequality, the probability that the number of the dominant leaders in the next epoch are at least $2$ is
	\begin{equation*}
		\begin{split}
			& P(X_{t'+1} \geq 2) \leq \frac{E[X_{t'+1}]}{2} = \frac{1}{2}
		\end{split}
	\end{equation*}
	\noindent The probability that after $\log_m{n}$ epochs, there is no unique leader in the population is 
	\begin{equation*}
		\begin{split}
			& P[\text{at least } 2 \text{ leaders exist after } \log_m{n} \text{ epochs}] \leq (\frac{1}{2})^{\log_m{n}} = \frac{1}{2^{log_m{n}}}
		\end{split}
	\end{equation*}
	
	\noindent The total number of epochs until there exists a unique leader in the population is w.h.p. $\frac{2\log{n} - \log{(a\log^2{n})}}{\log{m}} + 1 = O(\frac{\log{n}}{\log{m}})$.\\
	\qed
\end{proof}

\begin{theorem}
	\label{theorem:2}
	Our \textit{Leader Election} protocol elects a unique leader in $O(\frac{\log^2{n}}{\log{\log{n}}})$ parallel time w.h.p..
\end{theorem}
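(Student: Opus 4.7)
The plan is to combine Theorem~\ref{theorem:1}, which bounds the number of epochs needed to elect a unique leader, with a bound on the parallel time consumed by a single epoch, and then multiply. Specializing to $m = \log n$ as in the preceding analysis, Theorem~\ref{theorem:1} guarantees that after $O(\log n / \log\log n)$ epochs there is a unique leader w.h.p. So the goal reduces to proving that each epoch lasts only $O(\log n)$ parallel time w.h.p., and that the failure probabilities can be unioned over all epochs without damaging the high-probability guarantee.

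First I would observe that an epoch $e$ ends as soon as some leader's counter $c$ reaches $b\log n$, triggering the transition $r \mapsto r+1$. Lemma~\ref{lemma:4} already tells us that by the time any such counter reaches $b\log n$, the dominant tuple of that epoch has been propagated to the whole population w.h.p. Conversely, I would use Lemma~\ref{lemma:2} to show that the dominant tuple's epidemic converges in $\Theta(\log n)$ parallel time. Once a dominant leader sees an infected population, each of its interactions with any node of matching or smaller tuple increments its counter, so the leader's counter grows at rate $\Theta(1)$ per parallel time unit; hence the counter reaches $b\log n$ within $O(\log n)$ additional parallel time w.h.p. (using a Chernoff tail identical in spirit to the one in Lemma~\ref{lemma:4}). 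Putting these together, a single epoch requires $O(\log n)$ parallel time w.h.p.

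Next I would multiply the two bounds: with $m=\log n$, at most $O(\log n / \log\log n)$ epochs are needed (Theorem~\ref{theorem:1}), each of length $O(\log n)$ parallel time, for a total of $O(\log^2 n / \log\log n)$ parallel time. To ensure that this is a high-probability statement overall, I would take a union bound over all epochs of the per-epoch failure events (leader's counter failing to reach $b\log n$ in $O(\log n)$ parallel time, or the epidemic failing to spread). Since each failure probability is at most $n^{-c}$ for a suitable constant $c$, and there are only $O(\log n)$ epochs, the total failure probability remains polynomially small.

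The main obstacle will be the per-epoch time bound: care is needed because during an epoch there may temporarily coexist several competing dominant tuples (from different leaders that independently generated the same maximum random number, or from leaders still in epoch $e-1$), and the counter of the leader that ultimately triggers the round transition only counts effective interactions. I would therefore argue separately that (i) the single dominant tuple's epidemic saturates in $O(\log n)$ parallel time by Lemma~\ref{lemma:2}, and (ii) from that point on, every interaction between a dominant leader and any other node is effective and increments its counter, so $b\log n$ effective interactions for a specific leader accumulate in $O(\log n)$ parallel time w.h.p. via a standard concentration argument on the number of times that leader is selected by the scheduler in $\Theta(n\log n)$ interactions. Finally, the terminal-round cap $\lceil (2b\log n - \log(b\log^2 n))/\log m\rceil$ in the protocol ensures that once a unique leader emerges, the system stabilizes without further round increments, which completes the proof.
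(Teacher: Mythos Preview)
Your proposal is correct and follows essentially the same route as the paper: combine the $O(\log n/\log m)$ epoch bound from Theorem~\ref{theorem:1} with the $\Theta(\log n)$ per-epoch epidemic time from Lemma~\ref{lemma:2} (and Lemma~\ref{lemma:4} to certify synchronization), specialize to $m=\Theta(\log n)$, and invoke the terminal-round cap for stabilization. If anything, you are more explicit than the paper about the counter reaching $b\log n$ after saturation and about union-bounding the per-epoch failures; the paper also cites Lemma~\ref{lemma:1} to ensure the surviving leader is never demoted, which you could add for completeness.
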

\begin{proof}
	There are initially $n$ leaders in the population. During an epoch $e$, by Lemma \ref{lemma:2} the dominant tuple spreads throughout the population in $\Theta(\log{n})$ parallel time, by Lemma \ref{lemma:4}  no (dominant) leader can enter to the next epoch if their epidemic has not been spread throughout the whole population before and by Theorem \ref{theorem:1}, there will exist a unique leader after $O(\frac{\log{n}}{\log{m}})$ epochs w.h.p., thus, for $m = b\log{n}$ the overall parallel time is $O(\frac{\log^2{n}}{\log{\log{n}}})$. Finally, by Lemma \ref{lemma:1}, this unique leader can never become follower and according to the transition function in Protocol \ref{protocol:leader_election}, a follower can never become leader again. \\
	\noindent The rule which says the leaders stop increasing their rounds if $r>=\frac{2b\log{n} - \log{(b\log^2{n})}}{\log{m}}$, unless they interact with another leader, implies that the population stabilizes in $O(\frac{\log^2{n}}{\log\log{n}})$ parallel time w.h.p. and when this happens, there will exist only one leader in the population and eventually, our protocol always elects a unique leader. 
\end{proof}

\begin{remark}
	By adjusting $m$ to be any number between a constant and $n$ and conducting a very similar analysis we may obtain a single leader election protocol whose time and space can be smoothly traded off between $O(\log^2 n)$ to $O(\log n)$ time and $O(\log n)$ to $O(n)$ space.		
\end{remark}

\newcommand{\polylog}{\mathrm{polylog}}
\newcommand{\N}{\mathbb{N}}

\section{Experiments}
\label{sec:all_experiments}

We have also measured the stabilization time of our \textit{Leader Election} and \textit{Population Size Estimation using a unique leader} algorithms for different network sizes. We have executed our protocols $100$ times for each population size $n$, where $n=2^i$ and $i=[3,14]$. Regarding the \textit{Leader Election} algorithm which assumes some knowledge on the population size, the results (Figure. \ref{fig:leader_election}) support our analysis and confirm its logarithmic behavior. In these experiments, the maximum number that the nodes could generate was $m=10$. Finally, all executions elected a unique leader in $a\frac{\log^2{n}}{\log{10}}$ parallel time except one in which two leaders existed by that time (eventually, only one leader remained).

\begin{figure}
	\begin{subfigure}[h]{0.47\linewidth}
		\includegraphics[width=\linewidth]{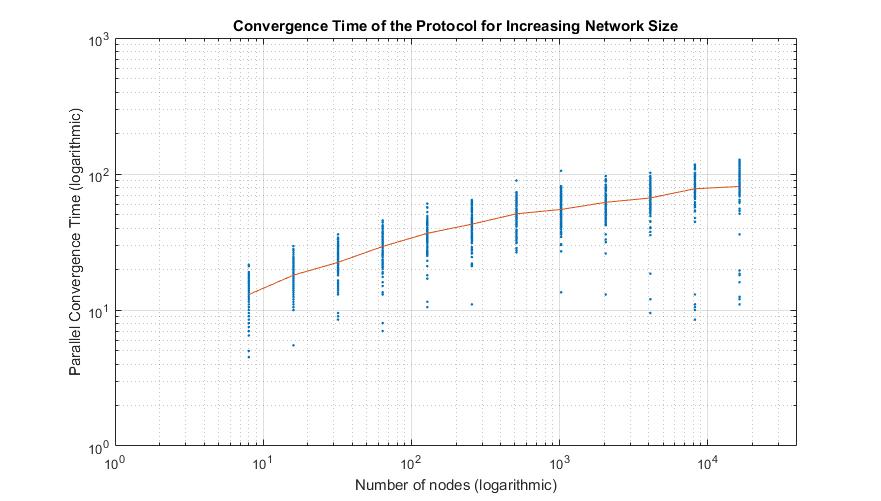}
		\caption{Convergence time.}
		\label{fig:leader_election_time}
	\end{subfigure}
	\hfill
	\begin{subfigure}[h]{0.53\linewidth}
		\includegraphics[width=\linewidth]{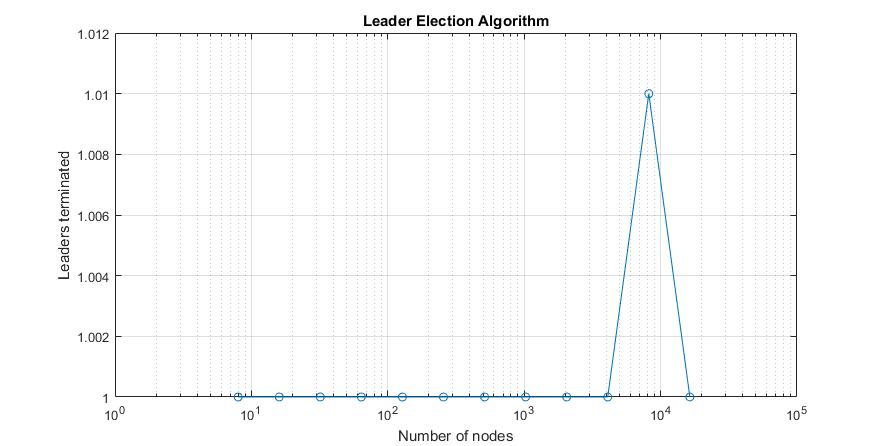}
		\caption{Number of leaders after $\frac{\log^2{n}}{\log{\log{n}}}$ parallel time.}
		\label{fig:leader_election_leaders}
	\end{subfigure}%
	\caption{Leader Election with approximate knowing of $n$. Both axes are logarithmic. In $(a)$ the dots represent the results of individual experiments and the line represents the average values for each network size.}
	\label{fig:leader_election}
\end{figure}

\noindent The stabilization time of our \textit{Approximate Counting with a unique leader} algorithm is shown in Figure \ref{fig:population_size_estimation_time}. The algorithm always gives very close estimations to the actual size of the population (Figure \ref{fig:population_size_estimation_estimations}). Moreover, in Figure \ref{fig:population_size_estimation_counters}, we show the values of the counters $c_q$ and $c_a$, when half of the population has been infected by the epidemic. These experiments support our analysis, while the counter of infected nodes reaches a constant number and the counter of non-infected nodes reaches a value related to $\log{n}$.

\begin{figure}
	\begin{subfigure}[h]{0.5\linewidth}
		\includegraphics[width=\linewidth]{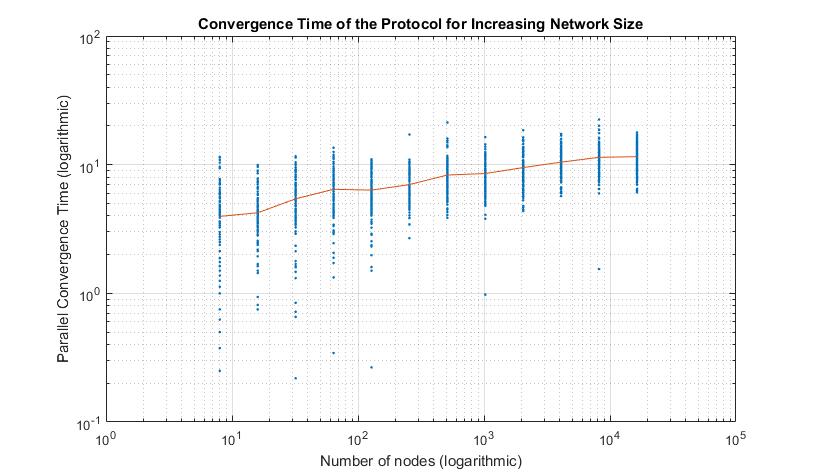}
		\caption{Convergence time.}
		\label{fig:population_size_estimation_time}
	\end{subfigure}
	\hfill
	\begin{subfigure}[h]{0.5\linewidth}
		\includegraphics[width=\linewidth]{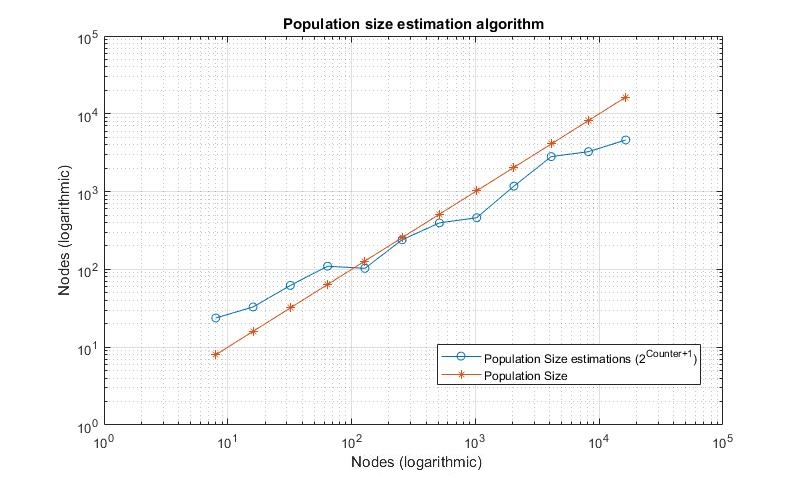}
		\caption{Estimations and actual sizes of the population.}
		\label{fig:population_size_estimation_estimations}
	\end{subfigure}%
	\caption{Approximate Counting with a unique leader. Both axes are logarithmic. In $(a)$ the dots represent the results of individual experiments and the line represents the average values for each network size.}
\end{figure}

\begin{figure}
	\centering
	\includegraphics[width=0.8\textwidth]{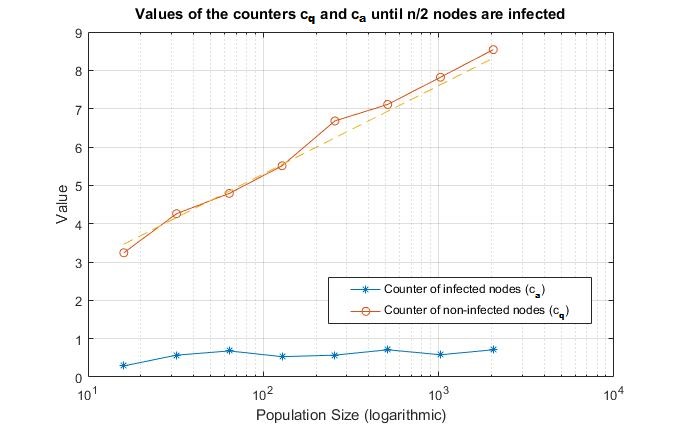}
	\caption{Counters $c_q$ and $c_a$ when half of the population has been infected by the epidemic.}
	\label{fig:population_size_estimation_counters}
\end{figure}

\section{Open Problems}
\label{sec:conclusions}

Call a population protocol \emph{size-oblivious} if its transition function does not depend on the population size. Our leader election protocol requires a rough estimate on the size of the population in order to elect a leader in polylogarithmic time. In addition, our approximate counting protocol requires a unique leader who initiates the epidemic process and then gives an upper bound on the population size. Is it possible to completely drop these assumptions by composing our protocols (i.e., design a size-oblivious and leaderless protocol)?

Moreover, in our leader election protocol, when two nodes interact with each other, the amount of data which is transfered is $O(max\{\log{\log{n}}, \log{m}\})$ bits. In certain applications of population protocols, the processes are not able to transfer arbitrarily large amount of data during an interaction. Can we design a polylogarithmic time population protocol for the problem of leader election that satisfies this requirement?

\vspace{1cm}

\noindent \textbf{Acknowledgments} We would like to thank David Doty and Mahsa Eftekhari for their valuable comments and suggestions during the development of this research work.

\newpage

\bibliographystyle{unsrt}
\bibliography{FullPaper}


\end{document}